\long\def\remove#1{}
\theoremstyle{plain}
\newtheorem{theorem}{Theorem}
\newtheorem{proposition}{Proposition}
\theoremstyle{remark}
\newtheorem{remark}{Remark}
\theoremstyle{definition}
\newtheorem{definition}{Definition}
\newtheorem{problem}{Problem}
\newtheorem{algr}[algorithm]{Algorithm}
\newcommand{\algmargin}{\the\ALG@thistlm}
\newlength{\whilewidth}
\algnewcommand{\parState}[1]{\State%
  \parbox[t]{\dimexpr\linewidth-\algmargin}{\strut #1\strut}}
\long\def\remove#1{}
\definecolor{darkblue}{rgb}{0.0, 0.0, 0.8}
\definecolor{darkred}{rgb}{0.8, 0.0, 0.0}
\definecolor{darkgreen}{rgb}{0.0, 0.8, 0.0}
\newcommand {\mm}[1] {\ifmmode{#1}\else{\mbox{\(#1\)}}\fi}
\newcommand{\Z}                 {\mathrm {\mathbb{Z}}}
\newcommand{\Hm}{\mathrm{\sf H}}
\newcommand{\Real}{\mathbb{R}}
\newcommand{\Pm}{\mathcal{P}}
\newcommand{\Intmod}{\mathcal{I}}
\newcommand{\Bas}{\mathcal{B}}
\newcommand{\Vect}{\mathrm{\sf Vect}}
\newcommand{\Simp}{\mathrm{\sf Simp}}
\newcommand{\Fcal}{\mathcal{F}}
\newcommand{\Ical}{\mathcal{I}}
\newcommand{\Qcal}{\mathcal{Q}}
\newcommand\SetSymbol[1][]{\nonscript\:#1\vert\allowbreak\nonscript\:\mathopen{}}
\providecommand\given{} 
\DeclarePairedDelimiterX\Set[1]\{\}{\renewcommand\given{\SetSymbol[\delimsize]}#1}
\DeclareMathOperator{\persdgm}{D}
\let\emptyset\varnothing
\renewcommand{\And}{\textbf{and}\xspace}
\newcommand{\Or}{\textbf{or}\xspace}
\begin{document}

\title{Persistent $1$-Cycles: Definition, Computation, and Its Application\thanks{Supported by NSF grants CCF-1740761 and CCF-1839252.}}

\author{Tamal K. Dey\thanks{Department of Computer Science and Engineering, The Ohio State University. \texttt{dey.8@osu.edu}}
\and Tao Hou\thanks{Department of Computer Science and Engineering, The Ohio State University. \texttt{hou.332@osu.edu}}
\and Sayan Mandal\thanks{Department of Computer Science and Engineering, The Ohio State University. \texttt{mandal.25@osu.edu}}
}

\date{}

\maketitle              

\begin{abstract}
Persistence diagrams, which summarize the birth and death of homological features extracted from data, are employed as stable signatures for applications in image analysis and other areas. 
Besides simply considering the multiset of intervals included in a persistence diagram, some applications need to find representative cycles for the intervals.
In this paper, we address the problem of computing these representative cycles, termed as {\em persistent $1$-cycles}, for $\Hm_1$-persistent homology with $\Z_2$ coefficients. 
The definition of persistent cycles is based on the interval module decomposition of persistence modules, which reveals the structure of persistent homology.
After showing that the computation of the optimal persistent $1$-cycles is NP-hard, 
we propose an alternative set of {\it meaningful} persistent $1$-cycles that can be computed with an efficient polynomial time algorithm. 
We also inspect the stability issues of the optimal persistent $1$-cycles and the persistent $1$-cycles computed by our algorithm with the observation that the perturbations of both cannot be properly bounded.
We design a software which applies our algorithm to various datasets. Experiments on 3D point clouds, mineral structures, and images show the effectiveness of our algorithm in practice. 

\end{abstract}

\newpage
\setcounter{page}{1}

\section{Introduction}
\label{sec:intro}
Persistent homology \cite{edelsbrunner2000topological} is an important invention leading to Topological Data Analysis, where the associated persistence diagrams serve as stable signatures for various datasets \cite{cohen2005stability} including the ones in image analysis {\cite{Carlsson:2008,Dey2017}}. 
Persistent homology has its theoretical foundations rooted in quiver theory {\cite{derksen2005quiver}}, in which case any persistence module indexed by a finite subcategory of $\Real$ can be decomposed into a direct sum of interval modules and the set of intervals of the interval modules, which constitute the persistence diagram, is unique for a persistence module \cite{chazal2016structure}. 
\begin{figure}
\centering
  \subfloat[]{\includegraphics[width=0.15\linewidth]{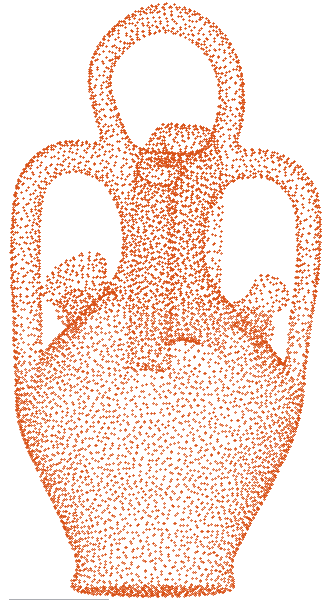}\label{fig:intro1}}
  \hspace{0.1cm}
  \subfloat[]{\includegraphics[width=0.2\linewidth]{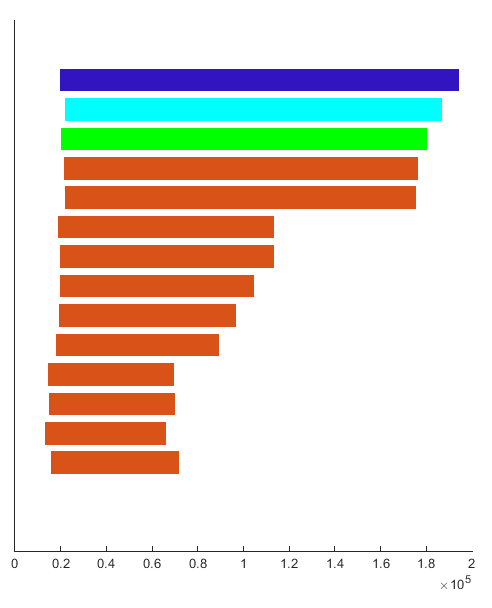}\label{fig:intro2}}
  \vspace{0.1cm}
  \subfloat[]{\includegraphics[width=0.15\linewidth]{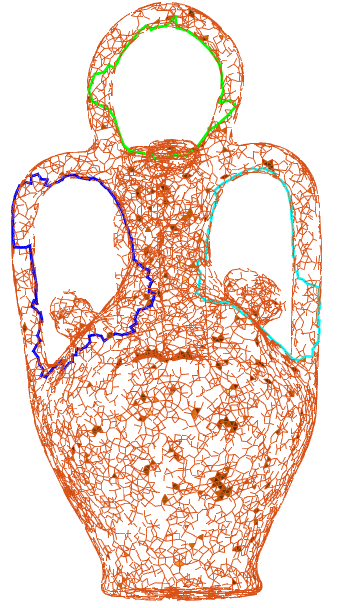}\label{fig:intro3}}
  \vspace{0.1cm}
  \subfloat[]{\includegraphics[width=0.2\linewidth]{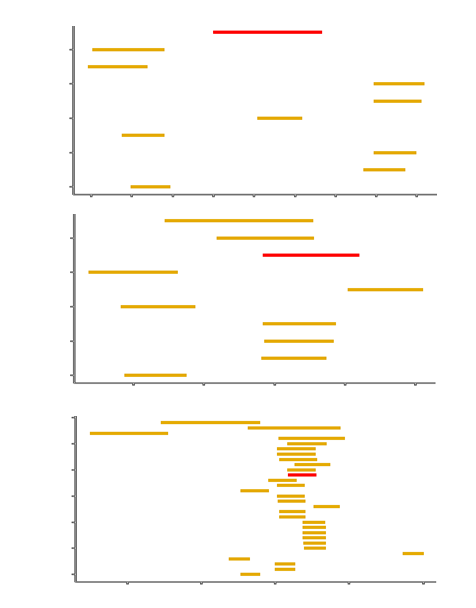}\label{fig:intro4}}
  \vspace{0.1cm}
  \subfloat[]{\includegraphics[width=0.12\linewidth]{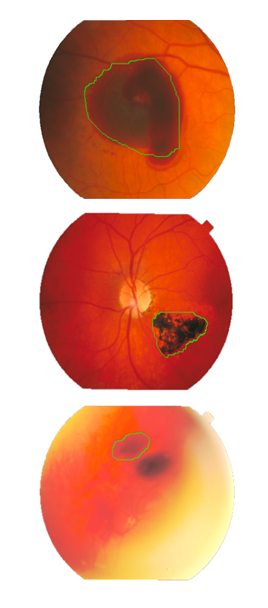}\label{fig:intro5}}

   \caption{
  (a) Point cloud of Botijo model. 
  (b,c) Barcode and persistent $1$-cycles for Botijo, where the 3 longest bars (dark blue, light blue, and green) have their corresponding persistent $1$-cycles drawn with the same colors.
  (d,e) Barcode and persistent $1$-cycles for the retinal image, with each green cycle corresponding to a red bar. }\label{fig:intro}
\end{figure}


Besides simply incorporating the persistence diagrams, some applications bring about the need of finding representative cycles for persistent homology {\cite{emmett2016multiscale,wu2017optimal}}. 
The computation of representative cycles for homology groups with $\Z_2$ coefficients has been extensively studied over the decades. 
While a polynomial time algorithm computing an optimal basis for first homology group $\Hm_1$ \cite{dey2010approximating} has been proposed, 
finding an optimal basis for dimension greater than one and localizing a homology class of any dimension are proved NP-hard \cite{chen2011hardness}.
There are a few works addressing the problem of finding representatives for persistent homology,
some  of which compute an optimal cycle at the birth index of an interval but do not consider what actually die at the death index~\cite{emmett2016multiscale,escolar2016optimal}.
Obayashi \cite{obayashi2017volume} formalizes the computation of optimal representatives for a finite interval as an integer programming problem. He advocates solving it with linear programs though the correctness is not necessarily guaranteed. 
Wu et al. \cite{wu2017optimal} proposed an algorithm for computing an optimal representative for a finite interval with a worst-case complexity exponential to the cardinality of the persistence diagram.

In this paper, we study the problem of computing representative cycles for persistent first homology group ($\Hm_1$-persistent homology) with $\Z_2$ coefficients. 
We term theses cycles as {\em persistent $1$-cycles} and show that the computation of the optimal cycles is NP-hard.
Then, we propose an alternative set of {\em meaningful} persistent $1$-cycles with an efficient polynomial time algorithm. 
Specifically, as interval module decomposition reveals the structure of persistence modules, we define persistent cycles which fit into this structure directly. 
Although similar definitions for finite intervals have already been proposed~\cite{obayashi2017volume,wu2017optimal}, to our knowledge, explicit explanation of how the representative cycles are related to persistent homology has not been addressed.
Furthermore, we inspect the stability of the minimal persistent $1$-cycles and persistent $1$-cycles computed by our algorithm. 
The perturbations of both classes of cycles turn out to be unstable. So, in this regard, our polynomial time algorithm is not any worse than an optimal cycle generating algorithm though is much more efficient in terms of the time complexity.

We use a software based on our algorithm to generate tight persistent $1$-cycles on 3D point clouds and 2D images as shown in Figure~\ref{fig:intro}. We experiment with various datasets commonly used in geometric modeling, computer vision and material science, details of which are given in Section~\ref{sec:experiment}. The \textit{software, named} \texttt{PersLoop}\textit{, along with an introductory video} and other supplementary materials are available at the project website 
\url{http://web.cse.ohio-state.edu/~dey.8/PersLoop}.

\section{Background}
\label{sec:background}

In this paper, we adopt the categorical definition of persistence module \cite{bubenik2014categorification}. 
A category $C$ consists of objects and morphisms from an object to another object. 
A functor $F:C\rightarrow B$ from $C$ to another category $B$ is a mapping 
such that any object $c$ of $C$ is mapped to an object $F(c)$ of $B$ 
and any morphism $f:c\rightarrow c'$ of $C$ is mapped to a morphism $F[f]:F(c)\rightarrow F(c')$ of $B$. 
We recommend \cite{awodey2010category} for the exact definitions of categories and functors. 
The definition of persistence module relies on some common categories: 
The category $\Z^+$ (the category $\Set{1,\ldots,n}$ alike) consists of objects from $\Z^+$ 
and a unique morphism from $i$ to $j$ if $i\leq j$. 
We also denote the morphism from $i$ to $j$ as $i\leq j$. 
The category $\Simp$ consists of objects which are all the simplicial complexes and morphisms which are simplicial maps.
The category $\Vect$ consists of objects which are all the vector spaces over $\Z_2$ and morphisms which are linear maps.
A persistence module $\Pm$ is then defined as a functor $\Pm:\Z^+\rightarrow\Vect$\footnote{
Sometimes we also call a functor $\Pm:\Set{1,\ldots,n}\rightarrow\Vect$ as a persistence module.}.

A persistence module is usually induced by a filtration $\Fcal=\Fcal(K)$ of a simplicial complex $K$, 
where the filtration $\Fcal:\emptyset=K_0\subseteq K_1\subseteq\ldots\subseteq K_m=K$ is a filtered sequence of subcomplexes of $K$ such that $K_{i+1}$ and $K_{i}$ differ by one simplex $\sigma_{i+1}$.
We can also interpret a filtration $\Fcal$ as a functor $\Fcal:\Z^+\rightarrow\Simp$, 
where $\Fcal(i)=K_i$ for $i\leq m$, $\Fcal(i)=K$ for $i>m$, 
and a morphism $\Fcal[i\leq j]:\Fcal(i)\rightarrow\Fcal(j)$ is the inclusion.
Denoting $\Hm_q:\Simp\rightarrow\Vect$ as the {$q_{\text{th}}$} homology functor with $\Z_2$ coefficients, 
the {$\Hm_q$-}persistence module {$\Pm_q^{\Fcal}$} of $\Fcal$ is obtained by composing the two functors $\Hm_q$ and $\Fcal$, that is, $\Pm_q^\Fcal=\Hm_q\Fcal$. 
Specifically, $\Pm_q^\Fcal(i)=\Hm_q(K_i)$ for $i\leq m$, $\Pm_q^\Fcal(i)=\Hm_q(K)$ for $i>m$, 
and the morphism $\Pm_q^\Fcal[i\leq j]:\Hm_q(K_i)\rightarrow\Hm_q(K_j)$\footnote{$K_j=K$ when $j>m$.}
is the linear map induced by the inclusion.

A special class of persistence modules is the interval modules. 
Given an interval $[b,d)\subset\Z^+$, an interval module $\Ical^{[b,d)}$ is defined as: 
$\Ical^{[b,d)}(i)=\Z_2$ for $i\in[b,d)$ and $\Ical^{[b,d)}(i)=0$ otherwise; 
$\Ical^{[b,d)}[i\leq j]$ is the identity map for $i,j\in[b,d)$ and $\Ical^{[b,d)}[i\leq j]$ is the zero map otherwise. 
{By quiver theory, a $\Hm_q$-persistence module obtained from a finite complex $K$ has a unique decomposition $\Pm_q^\Fcal=\bigoplus_{j\in J}\Intmod^{[b_j,d_j)}$ in terms of interval modules, 
where $J\subset \Z$ is a finite index set \cite{chazal2016structure}.} 
Let $\persdgm(\Pm_q^\Fcal)=\Set{[b_j,d_j)\given j\in J}$ denote the set of intervals of the interval modules which $\Pm_q^\Fcal$ decomposes into. 
Observe that $\persdgm(\Pm_q^\Fcal)$ is also called the {\it barcode} or {\it persistence diagram} in the literature \cite{edelsbrunner2010computational}. 
Sometimes we will abuse the notation slightly to write $\persdgm_q(\Fcal)$, where the argument is the filtration instead of the module $\Pm_q^\Fcal$ it generates.


\section{Persistent basis and cycles}

\begin{definition}[Persistent Basis]
\label{dfn:pers_bas}
An indexed set of $q$-cycles $\Set{c_j\given j\in J}$ is called a persistent $q$-basis for a filtration $\Fcal$ if $\Pm_q^{\Fcal}=\bigoplus_{j\in J}\Intmod^{[b_j,d_j)}$ and for each $j\in J$ and $b_j\leq k<d_j$, $\Intmod^{[b_j,d_j)}(k)=\Set{0,[c_j]}$.
\end{definition}


\begin{definition}[Persistent Cycle]
\label{dfn:pers_cyc}
For an interval $[b,d)\in \persdgm(\Pm_q^{\Fcal})$, a $q$-cycle $c$ is called a persistent $q$-cycle for the interval, 
if one of the following holds:
\begin{itemize}
\item $d\neq +\infty$, $c$ is a cycle in $K_{b}$ containing $\sigma_{b}$, and $c$ is not a boundary in $K_{d-1}$ but becomes a boundary in $K_{d}$;
\item $d= +\infty$ and $c$ is a cycle in $K_{b}$ containing $\sigma_{b}$.
\end{itemize}
\end{definition}

\begin{remark}
Note that the definition of persistent cycles for finite intervals is identical to that of \cite{obayashi2017volume,wu2017optimal}.
\end{remark}

The following theorem characterizes each cycle in a persistent basis:

\begin{theorem}
\label{thm:pers_bas_charac}
An indexed set of $q$-cycles $\Set{c_j\given j\in J}$ is a persistent $q$-basis for a filtration $\Fcal$ if and only if $\Pm_q^{\Fcal}=\bigoplus_{j\in J}\Intmod^{[b_j,d_j)}$ and $c_j$ is a persistent $q$-cycle for every interval $[b_j,d_j)\in \persdgm(\Pm_q^{\Fcal})$. 
\end{theorem}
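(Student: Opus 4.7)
The statement is an iff under the common hypothesis that the interval decomposition $\Pm_q^{\Fcal} = \bigoplus_{j \in J} \Intmod^{[b_j, d_j)}$ holds, so the content is the equivalence between the cycle-level condition of Definition~\ref{dfn:pers_bas} and the simplex-level condition of Definition~\ref{dfn:pers_cyc}. I would prove both directions by chasing classes through the decomposition, using repeatedly that every inclusion-induced map $\Hm_q(K_i) \to \Hm_q(K_j)$ is a direct sum of the structure maps of the interval summands, so the $i$-th component of a class is transported by the $i$-th interval-module structure map alone. Fix compatible generators $g_i(k) \in \Intmod^{[b_i,d_i)}(k)$, $k \in [b_i,d_i)$, so that $\{g_i(k) : i \in A_k\}$ is a basis of $\Hm_q(K_k)$, where $A_k := \Set{i \in J \given b_i \leq k < d_i}$.

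For the forward direction, suppose $\{c_j\}$ is a persistent basis, so $[c_j]_{K_k} = g_j(k)$ for $k \in [b_j,d_j)$. This immediately yields that $c_j$ is a cycle in $K_{b_j}$, and for finite $d_j$ also that $c_j$ is not a boundary in $K_{d_j - 1}$. For containment of $\sigma_{b_j}$, suppose $c_j \subseteq K_{b_j - 1}$; then $[c_j]_{K_{b_j - 1}}$ is defined with $j$-th component lying in $\Intmod^{[b_j,d_j)}(b_j - 1) = 0$, and transporting preserves this, so $[c_j]_{K_{b_j}}$ would have zero $j$-th component, contradicting $[c_j]_{K_{b_j}} = g_j(b_j)$. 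When $d_j < \infty$, the $j$-th structure map from $d_j - 1$ to $d_j$ lands in $\Intmod^{[b_j,d_j)}(d_j) = 0$ and the other components of $[c_j]_{K_{d_j - 1}} = g_j(d_j - 1)$ are already zero, so $[c_j]_{K_{d_j}} = 0$, giving $c_j$ as a boundary in $K_{d_j}$.

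For the reverse direction, suppose each $c_j$ is a persistent cycle for $[b_j, d_j)$ and expand $[c_j]_{K_{b_j}} = \sum_{i \in A_{b_j}} \alpha_{j,i}\, g_i(b_j)$ with $\alpha_{j,i} \in \Z_2$. The main obstacle is the key claim $\alpha_{j,j} = 1$. To establish it, note that birth indices are distinct, so for $i \in A_{b_j} \setminus \{j\}$ we have $b_i < b_j$; pick any cycle representative $c_i^0 \subseteq K_{b_i} \subseteq K_{b_j - 1}$ of $g_i$, and form $c' = c_j + \sum_{i \neq j} \alpha_{j,i}\, c_i^0$. Then $c'$ is a cycle in $K_{b_j}$ that contains $\sigma_{b_j}$ (since $c_j$ does and no $c_i^0$ does), and has class $\alpha_{j,j}\, g_j(b_j)$. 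If $\alpha_{j,j} = 0$, then $c'$ would be a boundary in $K_{b_j}$; but $\sigma_{b_j}$ is the last simplex added in $K_{b_j}$, so any $(q+1)$-simplex of $K_{b_j}$ lies in $K_{b_j - 1}$ together with all its faces, hence cannot contain $\sigma_{b_j}$ as a face. Thus no boundary in $K_{b_j}$ contains $\sigma_{b_j}$, a contradiction.

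With $\alpha_{j,j} = 1$ secured, at each level $k$ order $A_k$ by birth time $b_{j_1} < \cdots < b_{j_r}$ and push $[c_{j_s}]_{K_{b_{j_s}}}$ forward to obtain
\[
[c_{j_s}]_{K_k} = \sum_{t \leq s} \alpha_{j_s, j_t}\, g_{j_t}(k),
\]
a lower-triangular change-of-basis matrix with unit diagonal. Hence $\{[c_j]_{K_k} : j \in A_k\}$ is a basis of $\Hm_q(K_k)$. Setting $V_j(k) := \Span\{[c_j]_{K_k}\}$ for $k \in [b_j,d_j)$ and $V_j(k) := 0$ otherwise, the inclusion sends $c_j$ to itself and so transports these subspaces compatibly, while $[c_j]_{K_{d_j}} = 0$ from the persistent-cycle hypothesis provides the zero map at $d_j$. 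Thus $\{V_j\}$ realizes the interval decomposition of $\Pm_q^{\Fcal}$ with $\Intmod^{[b_j,d_j)}(k) = \{0, [c_j]_{K_k}\}$, so $\{c_j\}$ is a persistent basis. Every step outside the $\alpha_{j,j} = 1$ claim is essentially bookkeeping.
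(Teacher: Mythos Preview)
Your proof is correct. Both directions are handled carefully, and the crucial step---showing $\alpha_{j,j}=1$ via the observation that no $(q{+}1)$-simplex in $K_{b_j}$ can have the newly added $q$-simplex $\sigma_{b_j}$ as a face---is exactly the point where the argument needs real content, and you supply it cleanly.

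The paper takes a different organizational route for the reverse direction: rather than fixing abstract generators $g_i$ of a given decomposition and proving a lower-triangular change-of-basis, it argues by induction on the filtration index $i$, maintaining that $\Set{[c_j]\given i\in[b_j,d_j)}$ is a basis of $\Hm_q(K_i)$ and analyzing separately the cases where $\sigma_i$ is positive, negative, or neither. The paper's treatment of the positive case asserts that $[c_{j'}]$ is not in the span of the previous classes without spelling out why; your boundary argument is precisely what justifies that step. Conversely, the paper's inductive framing makes the compatibility of the decomposition with the structure maps somewhat more transparent, since it tracks exactly what happens at each step. Your linear-algebraic packaging via the triangular matrix is more compact and isolates the one nontrivial claim; the paper's induction is more hands-on and closer to how the persistence algorithm actually runs. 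Both are valid and ultimately rest on the same fact about $\sigma_{b_j}$.
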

\begin{proof}
Suppose $\Set{c_j\given j\in J}$ is an indexed set of $q$-cycles satisfying the above conditions. For each $j\in J$, we construct an interval module $\Intmod_j$, such that $\Intmod_j(i)=\{0,[c_j]\}$ for $b_j\leq i< d_j$ and $\Intmod_j(i)={0}$ otherwise. We claim that $\Pm_q^\Fcal=\bigoplus_{j\in J}\Intmod_j$. 
We first prove that $\Pm_q^\Fcal(i)=\bigoplus_{j\in J}\Intmod_j(i)$ for each $i\in\Z^+$, by proving that $\Set{[c_j]\given j\in J, i\in [b_j,d_j)}$ forms a basis of $\Pm_q^\Fcal(i)$. Using mathematical induction, since $\sigma_1$ is a vertex, this is trivially true. Suppose for $i-1$ this is true. 
If $\sigma_i$ is neither positive nor negative, i.e., $\Hm_q(K_{i-1}) \approx \Hm_q(K_{i})$ by the isomorphism induced from the inclusion, this is also trivially true for $i$. 
If $\sigma_i$ is positive, suppose the corresponding interval of $\sigma_i$ is $[b_{j'},d_{j'})$ (note that $b_{j'}=i$ and $d_{j'}$ could possibly be $+\infty$). 
Since $\Set{[c_j]\given j\in J, i-1\in [b_j,d_j)}$ are still independent in $\Pm_q^\Fcal(i)$ and $[c_{j'}]$ is not in the span of them, then $\Set{[c_j]\given j\in J, i-1\in [b_j,d_j)}\cup {[c_{j'}]}=\Set{[c_j]\given j\in J, i\in [b_j,d_j)}$ are independent in $\Pm_q^\Fcal(i)$. 
Since the cardinality of $\Set{[c_j]\given j\in J, i\in [b_j,d_j)}$ equals the dimension of $\Pm_q^\Fcal(i)$, it must form a basis of $\Pm_q^\Fcal(i)$. 
If $\sigma_i$ is negative, then there must be a $[c_{j'}]$ for a $j'\in J$ such that $d_{j'}=i$. 
For any $[c]\in \Pm_q^\Fcal(i)=\Hm_q(K_i)$, $[c]=\sum_{j\in J'}[c_j]$, where $J'\subseteq \Set{j\in J\given i-1\in [b_j,d_j)}$. 
If $j'\in J'$, then $[c]=\sum_{j\in J'-\{j'\}}[c_j]$, because $[c_{j'}]=0$ in $\Hm_q(K_i)$. 
Then $\Set{[c_j]\given j\in J, i-1\in [b_j,d_j)}-\{c_{j'}\}=\Set{[c_j]\given j\in J, i\in [b_j,d_j)}$ spans $\Hm_q(K_i)$. This means that it also forms a basis of $\Hm_q(K_i)$. It is then obvious that the direct sums of the maps of the interval modules are actually the maps of $\Pm_q^\Fcal$, so $\Set{c_j\given j\in J}$ is a persistent $q$-basis for $\Fcal$.

Suppose $\Set{c_j\given j\in J}$ is a persistent $q$-basis for $\Fcal$. For each $j\in J$, $c_j$ must not be in $K_{b_j-1}$, because otherwise $[c_j]$ would be in the image of $\Pm_q^\Fcal[b_j-1\leq b_j]$. It is obvious that $c_j$ must contain $\sigma_j$. Note that for each $j\in J$ and each $i\in [b_j,d_j)$, $\Pm_q^\Fcal[i\leq i+1]([c_j])=\Intmod^{[b_j,d_j)}[i\leq i+1]([c_j])$.
Then for each $j\in J$ such that $d_j\neq +\infty$, $[c_j]\neq 0$ in $K_{d_j-1}$ and $[c_j]=0$ in $K_{d_j}$.
\end{proof}

With Definition \ref{dfn:pers_cyc} and Theorem \ref{thm:pers_bas_charac}, it is true that for a persistent $q$-cycle $c$ of an interval $[b,d)\in\persdgm_q(\Fcal)$, we can always form an interval module decomposition of $\Pm_q^\Fcal$, where $c$ is a representative cycle for the interval module of $[b,d)$. 

\section{Minimal persistent $q$-basis and their limitations}
We have already defined persistent basis, the optimal versions of which are of particular interest because they capture more geometric information of the space. The cycles for an optimal (minimal) persistent basis have already been defined and studied in \cite{escolar2016optimal,obayashi2017volume}. In particular, the author of \cite{obayashi2017volume} proposed an integer program to compute these cycles. Although these integer programs can be solved exactly by linear programs for certain cases \cite{DHK11}, the integer program is NP-hard in general. This of course does not settle the question of whether the problem of computing minimal persistent $1$-cycles is NP-hard or not. We prove that it is indeed NP-hard and thus has no hope of admitting a polynomial time algorithm unless $\text{P}=\text{NP}$.

Consider a simplicial complex $K$ with each edge being assigned a non-negative weight. We refer to such $K$ as a weighted complex. For a $1$-cycle $c$ in $K$, define its weight to be the sum of all weights of its edges.
\begin{definition}[Minimal Persistent $1$-Cycle and $1$-Basis]
Given a filtration {$\Fcal$} on a weighted complex $K$,
a minimal persistent $1$-cycle for an interval of {$\persdgm_1(\Fcal)$} is defined to be a persistent $1$-cycle for the interval with the minimal weight. An indexed set of $1$-cycles $\Set{c_j\given j\in J}$ is a minimal persistent $1$-basis for $\Fcal$ if for every $[b_j,d_j)\in \persdgm_1(\Fcal)$, $c_j$ is a minimal persistent $1$-cycle for $[b_j,d_j)$.
\end{definition}

We prove that the following special version of the problem of finding a minimal persistent $1$-cycle is NP-hard. This special version reduces to the general version straightforwardly in polynomial time by assigning every edge a weight of $1$.

\begin{problem}[LST-PERS-CYC]
Given a filtration $\Fcal: \emptyset=K_0\subseteq K_1\subseteq\ldots\subseteq K_m=K$, and a finite interval $[b,d)\in\persdgm_1(\Fcal)$, find a $1$-cycle with the least number of edges which is born in $K_{b}$ and becomes a boundary in $K_{d}$.
\end{problem}

\begin{theorem} 
\label{thm:nphard}
The problem {LST-PERS-CYC} is NP-hard
\end{theorem}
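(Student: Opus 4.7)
The plan is to reduce in polynomial time from a known NP-hard problem about cycles in $2$-complexes, most naturally the problem of finding a minimum-edge $1$-cycle in a prescribed $\Z_2$-homology class of a simplicial $2$-complex (``homology localization''), whose NP-hardness was established by Chen and Freedman. Given an instance $(L,z,k)$---with $L$ a $2$-complex, $z\subset L$ a $1$-cycle, and $k\in\Z^+$---the goal is to build a filtration $\Fcal$ of an enlarged complex $K\supseteq L$ together with a distinguished finite interval $[b,d)\in\persdgm_1(\Fcal)$ such that the minimum edge-count of a persistent $1$-cycle for $[b,d)$ equals the minimum edge-count of a cycle homologous to $z$ in $L$, up to an additive constant that depends only on the reduction gadget.

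The construction proceeds in four phases. Fix a distinguished edge $e^{*}\in z$ and let $T^{*}$ be the triangles of $L$ containing $e^{*}$. In Phase A, build $L\setminus(\{e^{*}\}\cup T^{*})$ simplex by simplex, inserting triangles as early as possible so that every $1$-cycle born during Phase A dies before step $b$; if $L$ carries $\Hm_1$-classes other than $[z]$, first cone them off in a preprocessing step so that only $[z]$ survives. At step $b$ insert $e^{*}$, giving birth to the class $[z]$, represented by cycles of the form $e^{*}+p$, where $p$ is any path in the current subcomplex joining the endpoints of $e^{*}$. Phase C inserts $T^{*}$; these triangles may change which concrete cycles bound but do not kill the class. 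Phase D attaches a cone gadget on $z$---a new vertex $v$, edges from $v$ to every vertex of $z$, and the triangles $\{v\}\cup e$ for each $e\in z$---and orders its simplices so that exactly the last cone triangle, inserted at step $d$, kills the class. By Definition~\ref{dfn:pers_cyc}, a persistent $1$-cycle for $[b,d)$ is a cycle of $K_b$ through $e^{*}$ whose class in $K_{d-1}$ is the one killed at step $d$, so every such cycle decomposes as $c_L+c_C$, where $c_L$ is a cycle of $L$ homologous to $z$ and $c_C$ is a forced minimal detour through the gadget.

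The hard part will be ensuring that no gadget shortcut produces a persistent $1$-cycle cheaper than the honest homology-localization solution. The cleanest route is to first prove the theorem for the weighted variant by assigning very large integer weights to all cone and preprocessing edges and small weights to the edges of $L$, forcing every minimum-weight persistent $1$-cycle to contain only the forced constant number of gadget edges. The unweighted case (LST-PERS-CYC as stated) then follows by subdividing each edge of weight $w$ into a path of $w$ unit-weight edges, a blow-up that is polynomial in the input. Combined with the correspondence of the previous paragraph, this yields a polynomial-time reduction showing that deciding whether a persistent $1$-cycle for $[b,d)$ has at most $k+\text{const}$ edges is equivalent to solving the homology-localization instance. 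A routine secondary check---standard positive/negative simplex bookkeeping across each phase of $\Fcal$---confirms that $[b,d)$ is a finite interval of $\persdgm_1(\Fcal)$, completing the reduction.
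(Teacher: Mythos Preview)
Your high-level plan—build $L$ up to a distinguished edge $e^{*}$, then cone off $z$ so that the last cone triangle kills $[z]$—is exactly the skeleton of the paper's proof. But two of the pieces you rely on to turn this skeleton into a reduction do not work as stated.

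\textbf{The preprocessing step is both unnecessary and wrong.} If you cone off an extraneous class $[w]$ before step $b$, the \emph{triangles} of that cone sit in $K_b$, so in $K_b$ a cycle $c$ with $[c]=[z+w]$ in $\Hm_1(L)$ already satisfies $[c]=[z]$. Such a $c$ may be strictly shorter than every cycle homologous to $z$ in $L$, so the minimum persistent $1$-cycle no longer computes the answer to the original localization instance. Putting large weights on the cone \emph{edges} does nothing to prevent this, because the identification $[w]=0$ is created by $2$-simplices, which carry no weight. In fact the step is not needed: a short Mayer--Vietoris computation shows $\Hm_1(L)\hookrightarrow\Hm_1(K_{d-1})$ when you attach a cone on $z$ minus one triangle, so persistent $1$-cycles for $[b,d)$ are automatically cycles with $[c]=[z]$ in $\Hm_1(L)$ regardless of how many other classes $L$ carries.

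\textbf{The ``must contain $\sigma_b$'' constraint is the real obstacle, and your $c_L+c_C$ decomposition does not address it.} By Definition~\ref{dfn:pers_cyc} a persistent $1$-cycle lies in $K_b$, and in your construction the $1$-skeleton of $K_b$ is exactly the $1$-skeleton of $L$ (the cone appears only in Phase~D). Hence $c_C=0$ always; there is no ``forced detour through the gadget.'' What remains is: minimum cycle in $L$ homologous to $z$ \emph{and containing $e^{*}$}. For a black-box instance $(L,z)$ of homology localization nothing forces the optimal cycle to pass through any pre-chosen edge, so this is a different (and possibly easier) problem. Relatedly, if $e^{*}$ lies in any triangle of $L$, the first such triangle you add in Phase~C has $e^{*}$ as the youngest edge of its boundary and will pair with $e^{*}$, destroying the intended interval $[b,d)$.

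The paper sidesteps both issues by \emph{not} black-boxing Chen--Freedman: it rebuilds the MAX-2SAT gadget explicitly and picks $e$ to be a non-shared edge of a clause cycle. In that complex $e$ lies in no triangle, and $e$ appears in \emph{every} cycle of the class $[\bar z]$, so the ``contains $\sigma_b$'' condition is automatically satisfied by every optimum. That is the missing idea in your plan: you need an instance in which some edge of $z$ is guaranteed to lie in every minimal representative and in no $2$-simplex, and obtaining such an edge requires going inside the hardness construction rather than treating it as an oracle.
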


The proof of Theorem \ref{thm:nphard} is given in Appendix \ref{app:nphard-pf}.

\subsection{Instability of minimal persistent $1$-cycles}
\label{sec:instab_optim}
In this section, we inspect the stability issues of the minimal persistent $1$-cycles. 
Note that there may be multiple minimal persistent $1$-cycles for an interval and an algorithm may choose anyone of them. 
This means that the cycles cannot be stable under those measures that take into account the entire geometry of the cycles (e.g., Hausdorff distance).
In an attempt to sidestep this problem, we take a `weaker' measure of the cycles which is still meaningful, namely their lengths.  
We show that even under such a measure, minimal persistent $1$-cycles are unstable.
Specifically, we consider the lower star filtration \cite{edelsbrunner2010computational} of a vertex sequence, and inspect the perturbation of the lengths of persistent $1$-cycles under the perturbation of the sequence.
Since each interval $I$ in the $\Hm_1$-persistence diagram of a lower star filtration can be derived from an interval $I'$ in the $\Hm_1$-persistence diagram of a corresponding insertion filtration\footnote{
The \textit{insertion filtration} is actually the filtration defined in Section \ref{sec:background}.}, 
we can associate a persistent $1$-cycle for $I'$ to $I$. 
{The readers can verify that this assignment gives representatives for the decomposed interval modules of the $\Hm_1$-persistence module induced by the lower star filtration.}

\begin{figure}
    \centering
    \subfloat
      []{\includegraphics[width=0.3\linewidth]
      {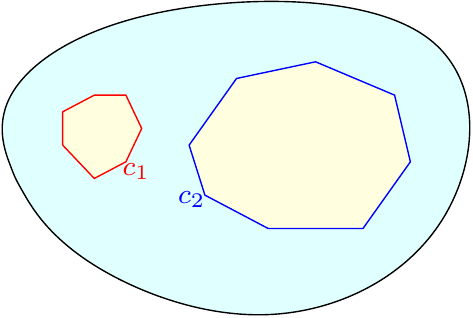}\label{fig:instab_optim}} \hspace{3em}
    \subfloat
      []{\includegraphics[width=0.3\linewidth]
      {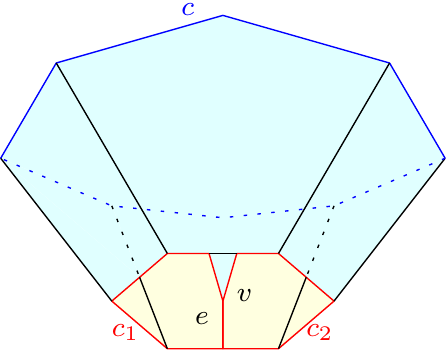}\label{fig:instab_local}}
    \caption{(a) A sphere with two holes shows the instability of the minimal persistent $1$-cycles. (b) The simplicial complex shows the instability of the cycles of Algorithm \ref{alg:all_int}. $c_1$ and $c_2$ are the two red simple cycles; $e$ is the edge adjacent to $v$.}
\end{figure}

Figure \ref{fig:instab_optim} presents an example for which the perturbation of the minimal persistent $1$-cycles cannot be properly bounded. 
The object in Figure \ref{fig:instab_optim} is a sphere with two holes (i.e., $c_1$ and $c_2$). 
We can assume that the object is nicely triangulated so that it becomes a simplicial complex. 
Let $v_1$ and $v_2$ be vertices from $c_1$ and $c_2$. 
We can construct a filtration\footnote{
Note that we are constructing an insertion filtration for a lower star filtration.}
by first forming the two cycles $c_1$ and $c_2$, 
with $v_1$ and $v_2$ being the last two vertices added, 
then adding the other parts of the simplicial complex. 
We then add a cone around $c_1$ to the filtration. 
We can first assume $v_1$ is added before $v_2$, and the indices of $v_1$ and the apex vertex of the cone in the sequence are $b$ and $d$.
Then the minimal persistent $1$-cycle for the interval $[b,d)$ is $c_1$.
If we switch $v_1$ and $v_2$, the minimal cycle for the interval $[b,d)$ becomes $c_2$. The difference of $c_1$ and $c_2$ can be made arbitrary under a single switch, which is the smallest possible perturbation of lower star filtration.

\section{Computing meaningful persistent $1$-cycles in polynomial time}

Because the minimal persistent $1$-cycles are not stable and their computation is NP-hard, we propose an alternative set of meaningful persistent $1$-cycles which can be computed efficiently in polynomial time.
We first present a general framework. Although the computed persistent $1$-cycles have no guaranteed properties, the framework lays the foundation for the algorithm computing meaningful persistent $1$-cycles that we propose later.

\begin{algr}
\label{alg:diag}
Given a simplicial complex $K$, 
a filtration $\Fcal:\emptyset=K_0\subseteq K_1\subseteq\ldots\subseteq K_m=K$, and $\persdgm_1(\Fcal)$, 
this algorithm finds a persistent $1$-basis for $\Fcal$.
The algorithm maintains a basis $\Bas_i$ for $\Hm_1(K_i)$ for every $i\in[0,m]$. 
Initially, let $\Bas_0=\varnothing$, then do the following for $i=1,\dots,m$:
\begin{itemize}
\item If $\sigma_i$ is positive, find a $1$-cycle $c_i$ containing $\sigma_i$ in $K_i$ and let $\Bas_i=\Bas_{i-1}\cup \{c_i\}$.
\item If $\sigma_i$ is negative, find a set $\Set{c_g\given g\in G}\subseteq \Bas_{i-1}$ so that $\sum_{g\in G}[c_g]=0$. This can be done in $O(\beta_i=|\Bas_i|)$ time by the annotation algorithm in~\cite{dey2014computing}. Maintaining the annotations will take $O(n^{\omega})$ time altogether where $K$ has $n$ simplices and $\omega$ is the matrix multiplication exponent.
Let $g^*$ be the greatest index in $G$, then $[g^*,i)$ is an interval of $\persdgm_1(\Fcal)$.
Assign $\sum_{g\in G}c_g$ to this interval as a persistent $1$-cycle and let $\Bas_i=\Bas_{i-1}\smallsetminus c_{g^*}$.
\item Otherwise, let $\Bas_i=\Bas_{i-1}$.
\end{itemize}
At the end, for each cycle $c_j\in \Bas_m$, assign $c_j$ as a persistent $1$-cycle to the interval $[j,+\infty)$.
\end{algr}

To prove the correctness of the algorithm, we need the following fact:

\begin{proposition}
\label{prop:decomp_iff} 
For a persistence module $\Pm:\{1,\ldots,n\}\rightarrow \Vect$ and a finite set of persistence modules
$\Set{\Qcal_j:\{1,\ldots,n\}\rightarrow \Vect\given j\in J}$, $\Pm=\bigoplus_{j\in J}\Qcal_j$ if and only if
$\Pm(i)=\bigoplus_{j\in J}\Qcal_j(i)$ for each $1\leq i\leq n$ and
$\Pm[i\leq i+1]=\bigoplus_{j\in J}\Qcal_j[i\leq i+1]$ for each $1\leq i< n$.
\end{proposition}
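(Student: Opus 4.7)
The plan is to prove the two directions separately, with the forward implication being essentially definitional and the backward implication requiring a short functoriality argument.

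For the forward direction, I would simply unpack the definition of the direct sum $\bigoplus_{j\in J}\Qcal_j$ in the functor category $\Vect^{\{1,\ldots,n\}}$. By definition, $\bigl(\bigoplus_{j\in J}\Qcal_j\bigr)(i) = \bigoplus_{j\in J}\Qcal_j(i)$ and $\bigl(\bigoplus_{j\in J}\Qcal_j\bigr)[i\leq i'] = \bigoplus_{j\in J}\Qcal_j[i\leq i']$ for every morphism $i \leq i'$ in $\{1,\ldots,n\}$. Specializing to $i' = i+1$ immediately yields both required equalities.

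For the backward direction, the key observation is that the indexing category $\{1,\ldots,n\}$ is generated (under composition) by the consecutive morphisms $i \leq i+1$. Concretely, for any $i \leq i'$ with $i < i'$, functoriality gives
\[
\Pm[i\leq i'] = \Pm[i'-1\leq i']\circ \Pm[i'-2\leq i'-1]\circ\cdots\circ \Pm[i\leq i+1],
\]
and analogously for each $\Qcal_j$. Assuming the hypotheses, each factor on the right is a direct sum over $j \in J$ of the corresponding $\Qcal_j$ morphisms, so composing them factor-wise (direct sums commute with composition of linear maps in $\Vect$) yields $\Pm[i \leq i'] = \bigoplus_{j\in J}\Qcal_j[i\leq i']$. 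Combined with the pointwise decomposition $\Pm(i) = \bigoplus_{j\in J}\Qcal_j(i)$, this gives natural isomorphisms at every object with compatible morphisms, which is exactly what it means for $\Pm$ to equal $\bigoplus_{j\in J}\Qcal_j$ as functors.

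I do not anticipate a major obstacle here; the only subtle point is verifying that direct sum commutes with composition of linear maps, which is a standard fact in $\Vect$: if $f_j : U_j \to V_j$ and $g_j : V_j \to W_j$, then $\bigoplus_j(g_j \circ f_j) = (\bigoplus_j g_j)\circ(\bigoplus_j f_j)$. This is immediate from evaluating both sides on a basis element supported in a single summand. With that fact in hand, the inductive extension from consecutive morphisms to arbitrary morphisms is routine, and the proposition follows.
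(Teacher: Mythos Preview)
Your proposal is correct. Note that the paper does not actually supply a proof of this proposition; it is stated as a fact and then invoked in the proof of correctness of Algorithm~\ref{alg:diag}. Your argument---definitional in the forward direction, and in the backward direction using that the poset category $\{1,\ldots,n\}$ is generated under composition by the consecutive arrows $i\leq i+1$, together with the interchange $\bigoplus_j(g_j\circ f_j)=(\bigoplus_j g_j)\circ(\bigoplus_j f_j)$---is the standard and correct way to establish it.
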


\begin{proof}[Proof of Correctness of Algorithm \ref{alg:diag}]
Denoting all the intervals $[g^*,i)$ found by the algorithm as $D$, we want to inductively prove that for all $i=1,\ldots,m$, the persistence module $\Pm^i$, which is the restriction of $\Pm_1^\Fcal$ to $\Set{1,...,i}$, satisfies:
\begin{equation}
\Pm^i=\bigoplus_{\substack{[b_j,d_j)\in D, d_j\leq i}}\Ical^{[b_j,d_j)} \;\oplus\; \bigoplus_{c_j\in\Bas_i}\Ical^{[j,i]}
\end{equation}
where the representative of $\Ical^{[b_j,d_j)}$ is the persistent $1$-cycle computed by the algorithm and the representative of $\Ical^{[j,i]}$ is $c_j$.
When $i=1$, $\Pm^1$ is trivial and the equation is certainly true. Suppose for $\Pm^i$, the equation is satisfied. If $\sigma_{i+1}$ is neither positive nor negative, or positive, then it is not hard to verify that the equation is still valid for $\Pm^{i+1}$ by Proposition \ref{prop:decomp_iff}. If $\sigma_{i+1}$ is negative, then we can let the persistent $1$-cycle computed by the algorithm for $\sigma_{i+1}$ be $\sum_{g\in G}c_g$ and $g^*$ be the greatest index in $G$. Since $\sum_{g\in G}c_g$ is also created by $\sigma_{g^*}$, we can let the representative of the interval module $\Ical^{[g^*,i]}$ for $\Pm^i$ be $\sum_{g\in G}c_g$. It is not hard then to verify that the equation is still satisfied for $\Pm^{i+1}$ by Proposition \ref{prop:decomp_iff}.
\end{proof}

 Based on Algorithm \ref{alg:diag},
 we present another algorithm
 which produces meaningful persistent $1$-cycles.

\begin{algr}
\label{alg:all_int}
In Algorithm \ref{alg:diag}, when $\sigma_i$ is positive, let $c_i$ be the shortest cycle containing $\sigma_i$ in $K_i$. The cycle $c_i$ can be constructed by adding $\sigma_i$ to the shortest path between vertices of $\sigma_i$ in $K_{i-1}$, which can be computed by Dijkstra's algorithm applied to the $1$-skeleton of $K_{i-1}$.
\end{algr}

Note that in Algorithm \ref{alg:all_int}, a persistent $1$-cycle for a finite interval is a sum of shortest cycles born at different indices. 
Since a shortest cycle is usually a good representative of its class, the sum of shortest cycles ought to be a good choice of representative for an interval. 
In some cases, when $\sigma_i$ is negative, the sum $\sum_{g\in G}c_g$ contains only one component. The persistent $1$-cycles computed by Algorithm \ref{alg:all_int} for such intervals are guaranteed to be optimal as shown below.

\begin{proposition}
\label{prop:alg2_optim}
In Algorithm \ref{alg:all_int}, when $\sigma_i$ is negative, if $|G|=1$, then $\sum_{g\in G}c_g$ is a minimal persistent $1$-cycle for the interval ending with $i$.
\end{proposition}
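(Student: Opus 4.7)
The plan is to reduce the proposition to the definition of persistent cycle combined with the way Algorithm \ref{alg:all_int} chooses its positive-edge representatives. When $|G|=1$, the persistent $1$-cycle assigned by the algorithm is literally $c_{g^*}$, the cycle stored at the positive edge $\sigma_{g^*}$. So the task is to show (i) that $c_{g^*}$ is genuinely a persistent $1$-cycle for the interval $[g^*,i)$, and (ii) that no persistent $1$-cycle for $[g^*,i)$ can be lighter than $c_{g^*}$.

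For (i), I would appeal directly to the correctness proof of Algorithm \ref{alg:diag}, which Algorithm \ref{alg:all_int} inherits: the algorithm always assigns $\sum_{g\in G} c_g$ as a valid persistent $1$-cycle to the interval $[g^*,i)$, and in the case $|G|=1$ this sum collapses to $c_{g^*}$. Hence $c_{g^*}$ satisfies the conditions of Definition \ref{dfn:pers_cyc} for $[g^*,i)$, i.e., it is a $1$-cycle in $K_{g^*}$ containing $\sigma_{g^*}$, is not a boundary in $K_{i-1}$, but becomes one in $K_i$.

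For (ii), I would use Definition \ref{dfn:pers_cyc} in the other direction: every persistent $1$-cycle $c$ for the finite interval $[g^*,i)$ must, by definition, be a $1$-cycle in $K_{g^*}$ that contains $\sigma_{g^*}$. But $c_{g^*}$ is constructed by Algorithm \ref{alg:all_int} to be the shortest cycle in $K_{g^*}$ containing $\sigma_{g^*}$; indeed, any such cycle consists of $\sigma_{g^*}$ together with an edge set whose only odd-degree vertices are the endpoints of $\sigma_{g^*}$, so its weight is at least that of $\sigma_{g^*}$ plus the shortest path between those endpoints in $K_{g^*-1}$, which is exactly the weight of $c_{g^*}$. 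Combining (i) and (ii) gives that $c_{g^*}$ attains the minimum weight among persistent $1$-cycles for $[g^*,i)$.

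There is no serious obstacle here: the only subtlety to address carefully is the justification that the shortest $1$-cycle (in the $\mathbb{Z}_2$ sense, i.e., an arbitrary even-degree edge set) containing $\sigma_{g^*}$ coincides with $\sigma_{g^*}$ plus a shortest path, since a priori a $1$-cycle could have additional disjoint components; these would only add weight, so they may be discarded, and the connected component containing $\sigma_{g^*}$ minimizes its weight precisely when the path replacement is a shortest path. Once this micro-argument is in place, the proposition follows immediately.
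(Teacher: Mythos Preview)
Your argument is correct. The paper actually states Proposition~\ref{prop:alg2_optim} without proof, so there is nothing to compare against; your two-step reduction---(i) correctness of Algorithm~\ref{alg:diag} guarantees that $c_{g^*}$ is a persistent $1$-cycle for $[g^*,i)$, and (ii) Definition~\ref{dfn:pers_cyc} forces every competitor to be a cycle in $K_{g^*}$ containing $\sigma_{g^*}$, among which $c_{g^*}$ is shortest by construction---is exactly the intended justification and fills the gap cleanly.
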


{In Section \ref{sec:experiment} where we present the experimental results, we can see that, scenarios depicted by Proposition \ref{prop:alg2_optim} occur quite frequently. Especially, for the {larvae and nerve datasets}, nearly all computed persistent $1$-cycles contain only one component and hence are minimal.}

A practical problem with Algorithm \ref{alg:all_int} is that unnecessary computational resource is spent for computing tiny intervals regarded as noise, especially when the user cares about significantly large intervals only. We present a more efficient algorithm for such cases.

\begin{proposition}
\label{prop:ignore_shor_int}
In Algorithm \ref{alg:diag} and \ref{alg:all_int}, when $\sigma_i$ is negative, for any $g\in G$, one has $b_g\leq g^*$ and $d_g\geq i$.
\end{proposition}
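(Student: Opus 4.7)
The plan is to reduce both inequalities to simple bookkeeping about how the basis $\Bas_i$ is updated by Algorithm \ref{alg:diag}. The proof of correctness of that algorithm already contains the key structural fact I will use: at every step $k$, each surviving basis element $c_j\in \Bas_k$ represents an interval module summand $\Ical^{[j,k]}$ of $\Pm^k$ whose birth index is exactly the label $j$ it carries, and this label does not change while $c_j$ sits in the basis. Consequently, for each $g\in G$ (so $c_g\in \Bas_{i-1}$), the eventual interval $[b_g,d_g)\in\persdgm_1(\Fcal)$ attached to $c_g$ satisfies $b_g=g$.

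With that identification in hand, $b_g\leq g^*$ follows in one line from $g\leq g^*=\max G$. For $d_g\geq i$, I would argue that $c_g\in\Bas_{i-1}$ says the interval associated with $c_g$ has not yet been closed upon entering step $i$, and then split into two cases: if $g=g^*$ then the cycle is removed at the current step, so $d_{g^*}=i$; otherwise $c_g$ persists into $\Bas_i$ and can only be removed at some strictly later negative step (or never, in which case $d_g=+\infty$), giving $d_g>i$. Either case yields $d_g\geq i$.

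There is no genuine obstacle here; both inequalities are one-line consequences of invariants already established for Algorithm \ref{alg:diag}, and the algorithm \ref{alg:all_int} inherits the same bookkeeping because it differs only in the choice of $c_i$ at positive steps, not in how the basis evolves. The only care required is to make the identification between each basis element $c_g$ and its eventual interval $(b_g,d_g)$ explicit, since Proposition \ref{prop:ignore_shor_int} writes this notation without restating the convention --- a single sentence pointing back to the inductive form of $\Pm^k$ used in the correctness proof suffices to license it.
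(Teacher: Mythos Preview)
Your proposal is correct and takes essentially the same approach as the paper: identify $b_g=g$ so that $b_g\leq g^*$ is immediate from $g^*=\max G$, and observe that $c_g\in\Bas_{i-1}$ means $\sigma_g$ is still unpaired at step $i$, forcing $d_g\geq i$. The paper's own proof is the same two sentences, just without your explicit case split on $g=g^*$.
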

\begin{proof}
Note that $\sigma_{b_g}$ must be unpaired before $\sigma_{i}$ is added, this implies that $d_g\geq i$. Since $g^*$ is the greatest index in $G$, $b_g=g\leq g^*$.
\end{proof}

Proposition \ref{prop:ignore_shor_int} leads to Algorithm \ref{alg:single_int} in which we only compute the shortest cycles at the birth indices whose corresponding intervals contain the input interval $[b,d)$. 
Since most of the time the user provided interval $[b,d)$ is a long interval, the intervals containing it constitute a small subset of all the intervals. This makes Algorithm \ref{alg:single_int} run much faster than Algorithm \ref{alg:all_int} in practice.

\begin{algorithm}
\caption{}
\hspace{4pt}\textbf{Input:} The input of Algorithm \ref{alg:all_int} plus an interval $[b,d)$

\hspace{4pt}\textbf{Output:} A persistent $1$-cycle for $[b,d)$ output by Algorithm \ref{alg:all_int}.
\begin{algorithmic}[1]

\State $G'\leftarrow\emptyset$
\For{$i\leftarrow 1,\dots,b$}
  \parIf {$\sigma_i$ is positive \And ($\sigma_i$ is paired with a $\sigma_j$ s.t $j\geq d$ \\ \Or $\sigma_i$ never gets paired)}
    \State $c_i\leftarrow$ the shortest cycle containing $\sigma_i$ in $K_i$
    \State $G'\leftarrow G'\cup \Set{i}$
  \EndparIf
\EndFor
\State find a $G\subseteq G'$ s.t. $\sum_{g\in G}[c_g]=0$ in $K_d$
\State output $\sum_{g\in G}c_g$ as the persistent $1$-cycle for $[b,d)$

\end{algorithmic}
\label{alg:single_int}
\end{algorithm}

Proposition \ref{prop:min} reveals some characteristics of the persistent $1$-cycles computed by Algorithm \ref{alg:all_int} and \ref{alg:single_int}:

\begin{proposition}[Minimality Property]
\label{prop:min}
The persistent $1$-cycle $\sum_{g\in G}c_g$ computed by Algorithm \ref{alg:all_int} and \ref{alg:single_int} has the following property: There is no non-empty proper subset $G'$ of $G$ such that $\sum_{g\in G'}[c_g]=0$ in $\Hm_1(K_{d})$, where $d$ is the death index of the interval to which $\sum_{g\in G}c_g$ is associated.
\end{proposition}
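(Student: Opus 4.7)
The plan is to argue by contradiction, exploiting the fact that adding the negative simplex $\sigma_d$ produces a one-dimensional kernel in the inclusion-induced map $\Hm_1(K_{d-1})\to\Hm_1(K_d)$. To avoid clashing with the internal variable of Algorithm~\ref{alg:single_int}, I will write $H$ for the hypothesized proper non-empty subset of $G$ in the proposition.

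First I would establish an independence lemma that underlies both Algorithms~\ref{alg:all_int} and~\ref{alg:single_int}: the set $\Set{[c_g]\given g\in G}$ is linearly independent in $\Hm_1(K_{d-1})$. For Algorithm~\ref{alg:all_int} this is immediate, because $G\subseteq\Bas_{d-1}$ by construction and $\Bas_{d-1}$ is a basis of $\Hm_1(K_{d-1})$ (as shown in the proof of correctness of Algorithm~\ref{alg:diag}). For Algorithm~\ref{alg:single_int}, I would show that its candidate set (denoted $G'$ inside the pseudocode) contains only indices $g\leq b$ whose $\sigma_g$ is positive and still unpaired at time $d-1$: the alternative ``paired with $\sigma_j$ for some $j\geq d$'' forces the pairing to occur at or after $d$, while ``never paired'' is even stronger. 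Consequently the corresponding cycles $c_g$ lie in the basis $\Bas_{d-1}$ produced by running Algorithm~\ref{alg:diag} on the same filtration, and the required independence follows.

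Next I would assume, for contradiction, that a non-empty $H\subsetneq G$ satisfies $\sum_{g\in H}[c_g]=0$ in $\Hm_1(K_d)$. By the independence lemma, $\sum_{g\in H}[c_g]$ is nonzero in $\Hm_1(K_{d-1})$, hence lies in the kernel of the inclusion-induced map $\Hm_1(K_{d-1})\to\Hm_1(K_d)$. Because $\sigma_d$ is a negative $2$-simplex, that kernel is one-dimensional and generated by $[\partial\sigma_d]$. The same argument applied to $G$ shows $\sum_{g\in G}[c_g]$ is also a nonzero element of this kernel, and over $\Z_2$ a one-dimensional space has a unique nonzero element, so $\sum_{g\in H}[c_g]=\sum_{g\in G}[c_g]$ in $\Hm_1(K_{d-1})$. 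Adding these identities yields $\sum_{g\in G\triangle H}[c_g]=0$ in $\Hm_1(K_{d-1})$, and linear independence of $\Set{[c_g]\given g\in G}$ forces $G\triangle H=\emptyset$, i.e.\ $G=H$, contradicting $H\subsetneq G$.

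The main obstacle I anticipate is the independence lemma for Algorithm~\ref{alg:single_int}, because its $G$ is drawn from a filter defined through the global persistence pairing rather than from the running basis used in Algorithm~\ref{alg:diag}. The crux is translating ``$\sigma_g$ paired with $\sigma_j$ for $j\geq d$, or never paired'' into ``$\sigma_g$ is unpaired at time $d-1$,'' and then tying this back to the fact that unpaired positive simplices at time $d-1$ contribute to the basis $\Bas_{d-1}$. Once the lemma is established, the linear-algebra contradiction over $\Z_2$ outlined above finishes the proof.
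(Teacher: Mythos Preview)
The paper states Proposition~\ref{prop:min} without proof, so there is nothing to compare against directly. Your argument is correct and is essentially the natural one: the key observations are that $\Set{[c_g]\given g\in G}$ is linearly independent in $\Hm_1(K_{d-1})$ and that the kernel of $\Hm_1(K_{d-1})\to\Hm_1(K_d)$ is one-dimensional over $\Z_2$, which together force any non-empty $H\subseteq G$ with $\sum_{g\in H}[c_g]=0$ in $\Hm_1(K_d)$ to satisfy $H=G$.

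Your anticipated obstacle for Algorithm~\ref{alg:single_int} dissolves readily. If $\sigma_g$ is paired with $\sigma_j$ for some $j\geq d$, then the pairing has not yet occurred at step $d-1$; if $\sigma_g$ is never paired, the same is trivially true. In either case $c_g\in\Bas_{d-1}$ under the bookkeeping of Algorithm~\ref{alg:diag}, since a cycle is removed from $\Bas$ precisely at the step where its creating positive edge gets paired. Hence the candidate set in Algorithm~\ref{alg:single_int} indexes a subset of the basis $\Bas_{d-1}$, and independence follows. One small addition worth making explicit: because those classes are independent in $\Hm_1(K_{d-1})$ and the kernel is one-dimensional, the subset $G$ found in line~8 of Algorithm~\ref{alg:single_int} is in fact uniquely determined (and coincides with the $G$ that Algorithm~\ref{alg:all_int} would compute), which justifies the stated output specification of Algorithm~\ref{alg:single_int} and removes any ambiguity about which $G$ the proposition refers to.
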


Given that the minimal persistent $1$-cycles are not stable, it is not surprising that the cycles computed by Algorithm \ref{alg:all_int} are also not stable under perturbation.
Figure \ref{fig:instab_local} presents an example for which the perturbation of persistent $1$-cycles computed by Algorithm \ref{alg:all_int} cannot be properly bounded. 
We can construct a filtration by first forming the cycle $c$ then adding the other parts of the simplicial complex in Figure \ref{fig:instab_local}, making $v$ the last vertex and $e$ the last simplex. 
We then add a cone around $c_1$ to the filtration. 
Let the indices of $v$ and the apex vertex of the cone in the vertex sequence be $b$ and $d$. 
When $c$ is formed, the last edge $e'$ of $c$ is positive, and $c$ is chosen as the shortest cycle containing $e'$. 
When $e$ is added, we can make $c_1$ and $c_2$ be the two shortest cycles containing $e$. 
When $c_1$ is coned, if $c_1$ is chosen as the shortest cycle containing $e$, then the persistent $1$-cycle for the {interval $[b,d)$} would be $c_1$. 
Otherwise, the persistent $1$-cycle would be $c+c_2$. 
The length of $c$ can be arbitrary, so that the difference of the two persistent $1$-cycles can be arbitrary under the same insertion filtration of the same lower star filtration.

\begin{figure}
\centering
  \subfloat[]{\includegraphics[width=0.57\linewidth]{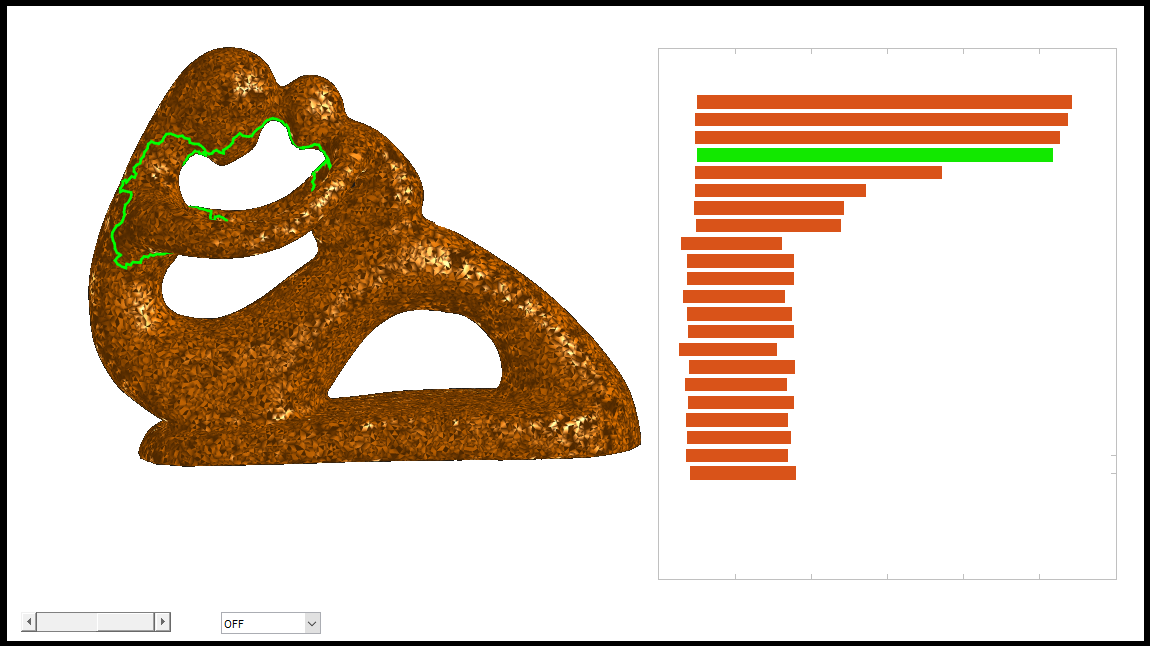}\label{fig:screenshot1}}
  \hspace{0.1cm}
  \subfloat[]{\includegraphics[width=0.37\linewidth]{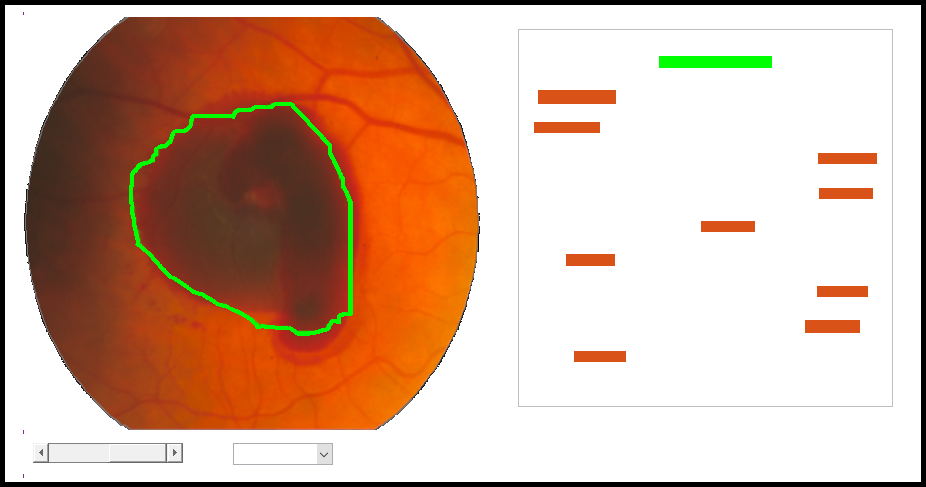}\label{fig:screenshot2}}
  \vspace{0.1cm}
   \caption{$\texttt{PersLoop}$ user interface demonstrating persistent $1$-cycles computed for a 3D point cloud (a) and a 2D image (b), where green cycles correspond to the chosen bars. }\label{fig:screenshot}
\end{figure}

\section{Results and experiments}
\label{sec:experiment}
Our software \texttt{PersLoop} implements Algorithm~\ref{alg:single_int}. 
Given a raw input, which is a 2D image or a 3D point cloud, and a filtration built from the raw input, the software first generates and plots the barcode of the filtration. 
The user can then click an individual bar to obtain the persistent $1$-cycle for the bar (Figure~\ref{fig:screenshot}).
The experiments on 3D point clouds and 2D images using the software show how our algorithm can find meaningful persistent $1$-cycles in several geometric and vision related applications.
\begin{figure*}[h]
\centering
\includegraphics[width=0.9\linewidth]{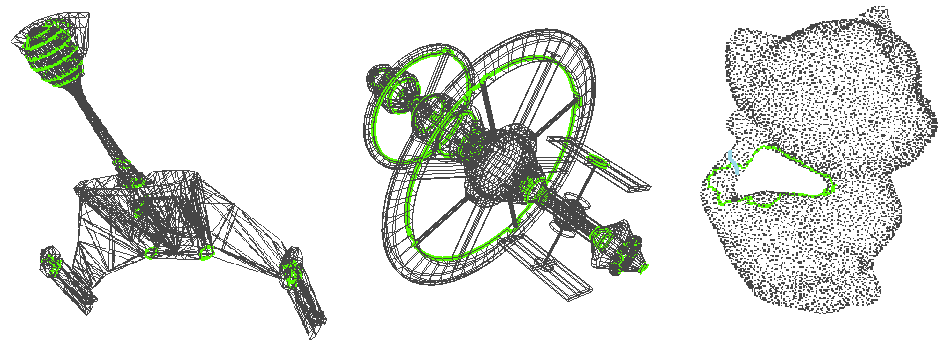}
  \caption{Persistent $1$-cycles (green) corresponding to long intervals computed for three different point clouds}
  \label{fig:3Dmodel}
\end{figure*}
\subsection{Persistent $1$-cycles for 3D point clouds}

We take a 3D point cloud as input and build a Rips filtration using the \texttt{Gudhi} library \cite{gudhi:urm}. 
As shown in Figure \ref{fig:3Dmodel}, persistent $1$-cycles computed for the three point clouds sampled from various models are tight and capture essential geometrical features of the corresponding persistent homology.
Note that our implementation of Algorithm \ref{alg:single_int} runs very fast in practice. For example, it took 0.3 secs to generate 50 cycles on a regular commodity laptop for the Botijo (Figure \ref{fig:intro1}) point cloud of size 10,000.

\subsection{Image segmentation and characterization using cubical complex}
In this section, we show the application of our algorithm on image segmentation and characterization problems. 
We interpret an image as a piecewise linear function on a 2-dimensional cubical complex. 
The cubical complex for an image has a vertex for each pixel, an edge connecting each pair of horizontally or vertically adjacent vertices, and squares to fill all the holes such that the complex becomes a disc.
The function values on the vertices are the density or color values of the corresponding pixels.
The lower star filtration \cite{edelsbrunner2010computational} of the PL function is then built and fed into our software.
We use the coning based annotation strategy~\cite{dey2014computing} to compute the persistence diagrams. 
In our implementation, a cubical tree, which is similar to the simplicial tree~\cite{Boissonnat2012}, is built to store the elementary cubes. 
Each elementary cube points to a row in the annotation matrix via the union find data structure. The simplicial counterpart of this association technique is described in~\cite{Boissonnat2013}.
\begin{figure}
\centering
  \subfloat[]{\includegraphics[width=0.35\linewidth]{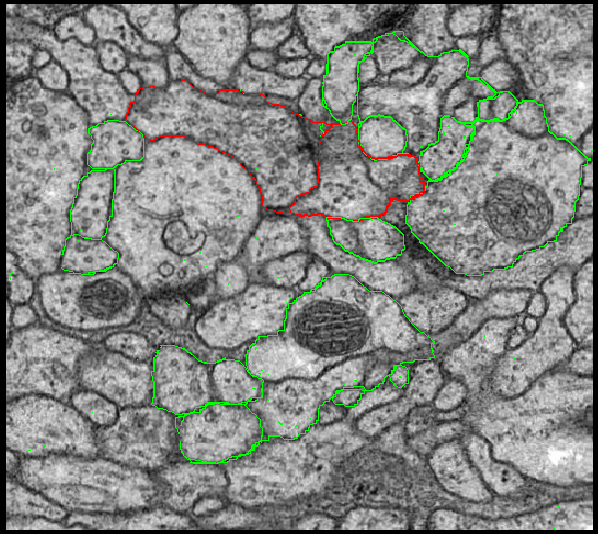}\label{fig:imageSeg1}}
  \hspace{0.1cm}
  \subfloat[]{\includegraphics[width=0.35\linewidth]{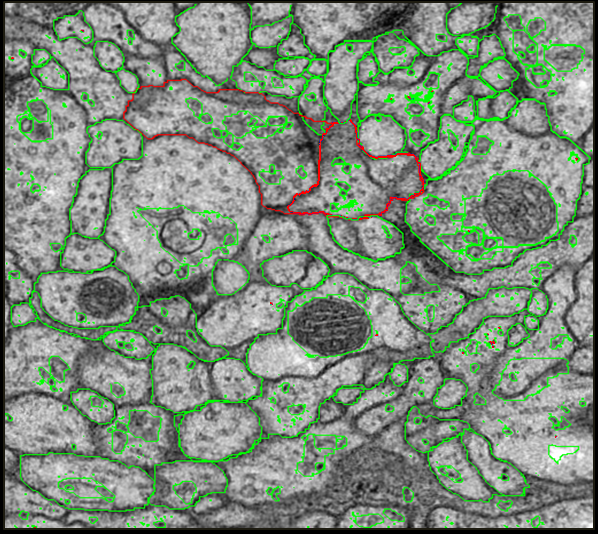}\label{fig:imageSeg2}}
  \vspace{0.1cm}
  \subfloat[]{\includegraphics[width=0.153\linewidth]{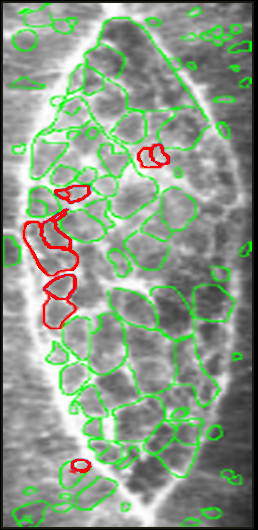}\label{fig:imageSeg3}}
   \caption{Persistent $1$-cycles computed for image segmentation. Green cycles indicate persistent $1$-cycles consisting of only one component ($|G|=1$) and red cycles indicate those consisting of multiple components ($|G|>1$). 
   (a,b) Persistent $1$-cycles for the top 20 and 350 longest intervals on the nerve dataset. 
   (c) Persistent $1$-cycles for the top 200 longest intervals on the Drosophila larvae dataset.}
   \label{fig:imageSeg}
\end{figure}

Our first experiment is the segmentation of a serial section Transmission Electron Microscopy (ssTEM) data set of the Drosophila first instar larva ventral nerve cord (VNC)~\cite{Cardona2010}. 
The segmentation result is shown in Figures~\ref{fig:imageSeg1} and~\ref{fig:imageSeg2}, from which we can see that the cycles are in exact correspondence to the membranes hence segment the nerve regions quite appropriately.
The difference between Figure ~\ref{fig:imageSeg1} and \ref{fig:imageSeg2} shows that longer intervals tend to have longer cycles.
Another similar application is the segmentation of the {low} magnification micrographs of a Drosophila  embryo~\cite{Kiehart2000}.
As seen in Figure~\ref{fig:imageSeg3}, the cycles corresponding to the top 200 longest intervals indicate that the larvae image is properly segmented.
 
We experiment on another dataset from the STARE project \cite{Hoover2003} to show how persistent $1$-cycles computed by our algorithm can be utilized for characterization of images. 
The dataset contains ophthalmologist annotated retinal images which are either healthy or suffering from diseases. 
Our aim is to automatically identify retinal and sub-retinal hemorrhages, which are black patches of blood accumulated inside eyes. 
Figures~\ref{fig:intro5} and~\ref{fig:screenshot2} show that a very tight cycle is derived around each dark hemorrhage blob even when the input is noisy.

\begin{figure}
\centering
  \subfloat[]{\includegraphics[width=0.15\linewidth]{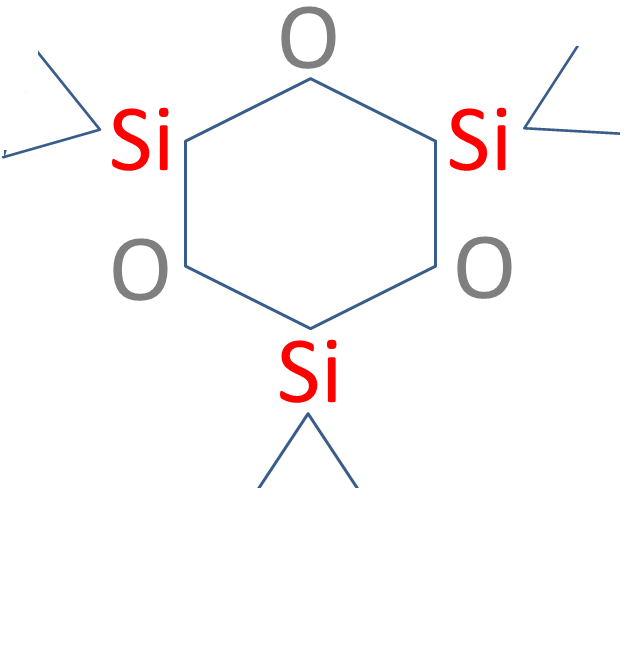}\label{fig:quartz1}}
  \hspace{0.1cm}
  \subfloat[]{\includegraphics[width=0.3\linewidth]{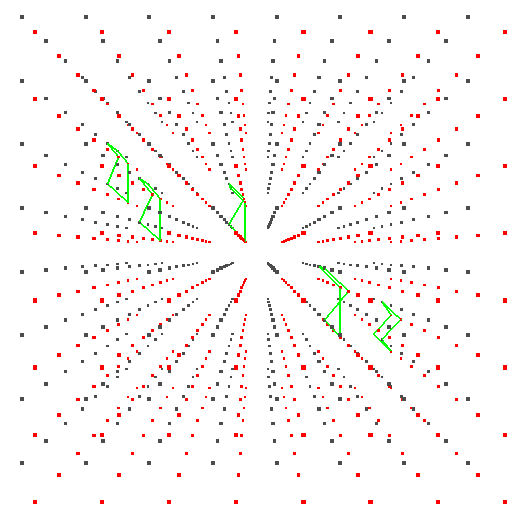}\label{fig:quartz2}}
  \vspace{0.1cm}
  \subfloat[]{\includegraphics[width=0.3\linewidth]{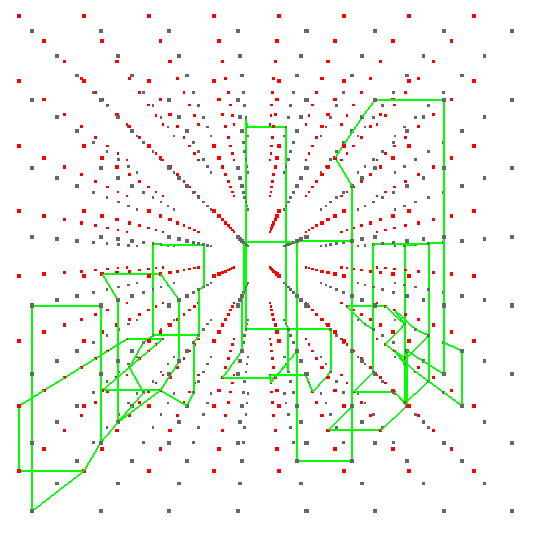}\label{fig:quartz3}}
    \vspace{0.1cm}
  \subfloat[]{\includegraphics[width=0.2\linewidth]{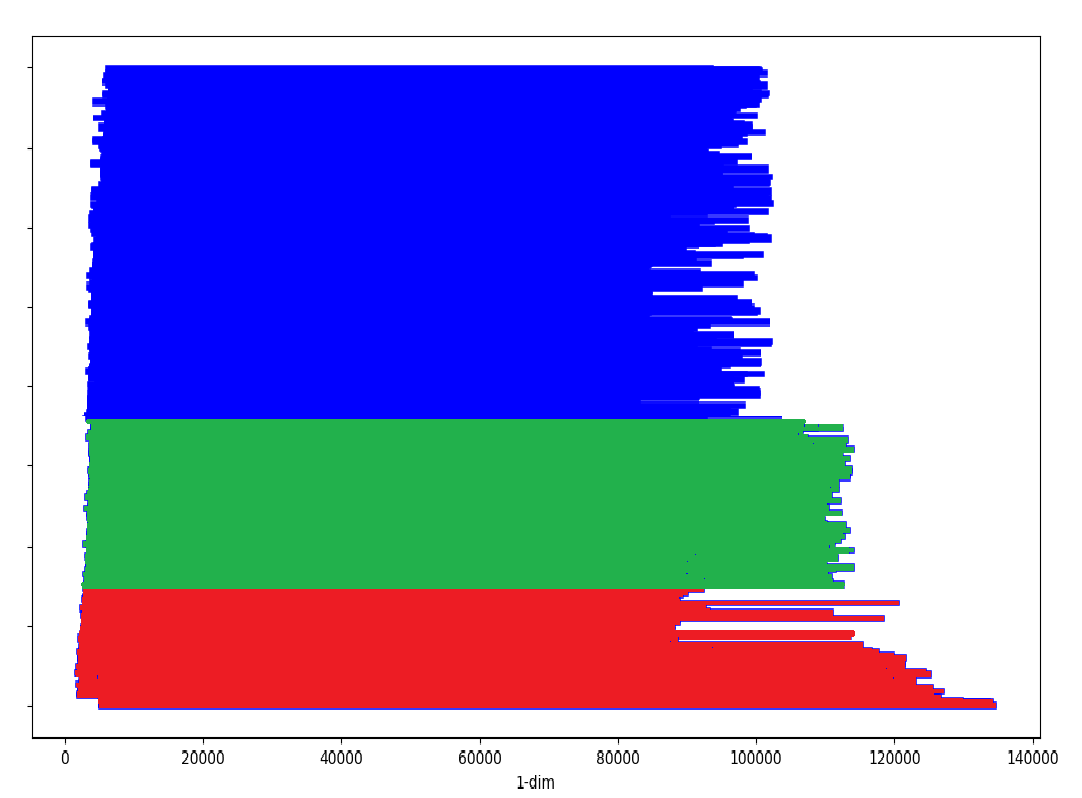}\label{fig:quartz4}}
   \caption{(a) Hexagonal cyclic structure of silicate glass. 
   (b) Persistent 1-cycles computed for the green bars with red points denoting silicate atoms and grey points denoting oxygen atoms. 
   (c) Persistent 1-cycles computed for the red bars.
   (d) Barcode for the filtration on silicate glass.}
   \label{fig:quartz}
\end{figure}

\subsection{Hexagonal structure of crystalline solids}
In this experiment, we use our persistent $1$-cycles to describe the crystalline structure of silicate glass (${SiO}_2$) commonly known as quartz. 
{Silicate glass} has a non-compact structure with three {silicon} and {oxygen} atoms arranged alternately in a hexagon as shown in Figure~\ref{fig:quartz1}.
We build a $8\times8\times8$ weighted point cloud with the silicon and oxygen atoms arranged according to the {space group} on the {crystal} structure as illustrated in Figure \ref{fig:quartz2}. 
The weights of the points correspond to {the atomic weights of the atoms}. 
On this weighted point cloud, we generate a filtration of weighted alpha complexes~\cite{Edelsbrunner:1992} by increasing $\alpha$ from $0$ to $\infty$. 

Persistent 1-cycles computed by our algorithm for this dataset reveal both the local and global structures of silicate glass. 
Figure~\ref{fig:quartz4} lists the barcode of the filtration we build and Figure~\ref{fig:quartz2} shows the persistent 1-cycles corresponding to the medium sized green bars in Figure~\ref{fig:quartz4}.
We can see on close observation that the cycles in Figure~\ref{fig:quartz2} are in exact accordance to the hexagonal cyclic structure of quartz shown in Figure~\ref{fig:quartz1}.
The larger persistent 1-cycles in Figure~\ref{fig:quartz3}, which span the larger lattice structure formed by our weighted point cloud, correspond to the longer red bars in Figure~\ref{fig:quartz4}. 
These cycles arise from the \textit{long-range order}{\footnote{Long-range order is the translational periodicity where the self-repeating structure extends infinitely in all directions}} of the crystalline solid.
This is evident from our experiment because if we increase the size of the input point cloud, these cycles grow larger to span the entire lattice.

\bibliographystyle{plain}
\bibliography{refs}


\appendix

\section{Proof of Theorem \ref{thm:nphard}}
\label{app:nphard-pf}
Similar to \cite{chen2007quantifying}, we will reduce the NP-hard MAX-2SAT \cite{papadimitriou1991optimization} problem to LST-PERS-CYC. The MAX-2SAT is defined as:

\begin{problem}[MAX-2SAT] 
Given $N$ variables $x_1,\ldots,x_N$ and $M$ clauses $c_1,\ldots,c_M$, with the clauses being the disjunction of at most two variables. Find an assignment of Boolean values to all the variables such that the maximal number of clauses are satisfied.
\end{problem}

\begin{proof}[Proof of Theorem \ref{thm:nphard}]
We will reduce MAX-2SAT to LST-PERS-CYC. Given an instance of MAX-2SAT, 
we first construct a simplicial complex $K$ as in \cite{chen2007quantifying}, 
by forming a triangulated cylinder $C_i$ for each variable $x_i$ and a cycle $w_j$ for each clause $c_j$, 
such that the two ends $z_i$ and $z_i'$ of $C_i$ correspond to $x_i$ and $\neg x_i$ and are the only two cycles with the least number of edges of their homology class in $C_i$. 
To make the process clearer, our construction of the cycles $z_i$, $z_i'$, and $w_j$ are a little different from \cite{chen2007quantifying}.
Each $z_i$ or $z_i'$ has $3M$ edges and $M$ of them are bijectively assigned to the $M$ clauses, 
such that in between each two consecutive edges assigned to some clauses, 
there are two edges which are not assigned to any clause. 
For a clause cycle $w_j=(z_i',z_k)$ (do the similar for other cases), 
we assign three edges to $w_j$ and pick one edge to be shared with the edges in $z_i'$ and $z_k$ assigned to $w_j$. 
Let $\overline{z}=\sum^N_{i=1}z_i + \sum^M_{j=1}w_j$, 
then our construction will make it true that, there is an assignment of Boolean values making $k$ clauses satisfied if and only if there is a cycle in $[\overline{z}]$ with $3MN+3M-2k$ edges.

Next we are going to construct a filtration $\Fcal'$ of a complex $K'$, where $K\subseteq K'$. We first construct a filtration $\Fcal$ of $K$, with the only restriction: Pick an edge $e$ of a clause cycle, which is not shared with any end cycle of the variable cylinders, and take $e$ as the last simplex added to the filtration. 
To construct $\Fcal'$ and $K'$, we need to find all simple cycles of $\overline{z}$. A simple cycle is defined as a cycle such that, each vertex has degree $2$ and there is only one connected component in the cycle. We can use a DFS-based algorithm to find all simple cycles for $\overline{z}$: Treat $\overline{z}$ as graph and run DFS on the graph. Find a non-DFS-tree edge $(v_1,v_2)$ of $\overline{z}$, then find the lowest common ancestor $w$ of $v_1$ and $v_2$ in the DFS tree. The paths in the DFS tree from $w$ to $v_1$ and $w$ to $v_2$, plus the edge $(v_1,v_2)$, form a simple cycle of $\overline{z}$. Delete the simple cycle from the graph and repeat the above process until the graph becomes empty. 

For each simple cycle $\overline{c}$ of $\overline{z}$, 
we attach a cylinder $\overline{C}$ to $\overline{c}$ such that, one end of $\overline{C}$ is $\overline{c}$, 
the other end of $\overline{C}$ is a quadrilateral, 
and all the other edges of $\overline{C}$ connect $\overline{c}$ to the quadrilateral. 
An example of such a cylinder connecting a dodecagon and a quadrilateral is illustrated in Figure \ref{fig:cylinder}. 
After all the cylinders are attached to the simple cycles, we get a simplicial complex $K_1$. 
We can append the simplices of $K_1\smallsetminus K$ to $\Fcal'$, 
to get a filtration $\Fcal_1$ of $K_1$. 
Since $K_1$ deformation retracts onto $K$, 
all negative triangles of $K_1\smallsetminus K$ are paired with an edge of $K_1\smallsetminus K$. 
We then construct a simplicial complex whose boundary is the sum of all the quadrilaterals and an outer cycle $c'$, as in Figure \ref{fig:quad_outer_complex}, 
and attach this simplicial complex to $K_1$ by gluing the quadrilaterals, to get a simplicial complex $K_2$. To form a filtration $\Fcal_2$ of $K_2$, we first append the red edges in Figure \ref{fig:quad_outer_complex} to $\Fcal_1$, then append all the other simplices of $K_2\smallsetminus K_1$. Finally, we form a cone around $c'$ to get $K'$ and append the missing simplices to get the filtration $\Fcal'$.

\begin{figure}
	\centering
	\subfloat
      []{\includegraphics[width=0.15\linewidth]{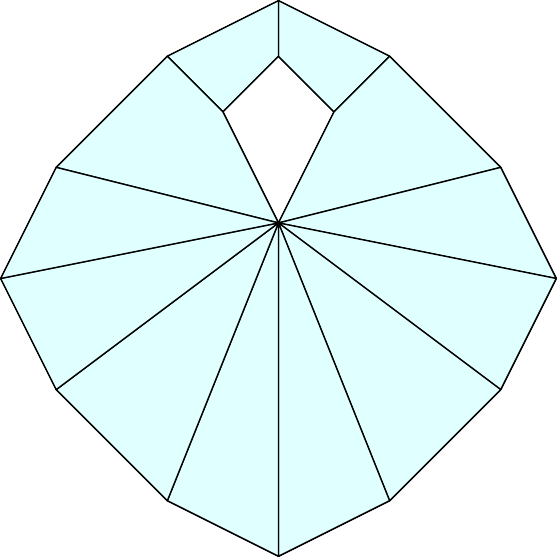}\label{fig:cylinder}} \hspace{3em}
    \subfloat
      []{\includegraphics[width=0.3\linewidth]{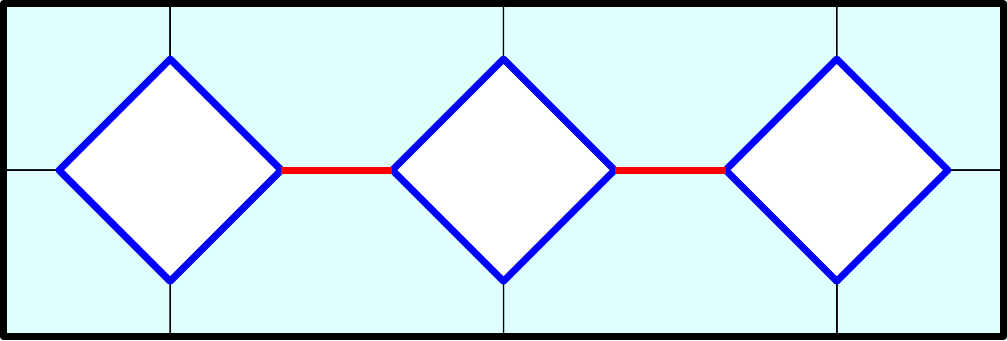}\label{fig:quad_outer_complex}}
	\caption{(a) A cylinder connecting a dodecagon and a quadrilateral. (b) A simplicial complex whose boundary is the sum of three quadrilaterals (blue) and an outer cycle (bold). Some polygons in the figure are not triangulated.}
\end{figure}

Let $t$ be the last triangle in $\Fcal'$, then it is true that $K'\smallsetminus t$ deformation retracts to the union of $K_1$ and the red edges. This indicates that all negative triangles of $K'\smallsetminus K_1$, other than $t$, are paired with edges of $K'\smallsetminus K_1$. Let the index of $e$ in $\Fcal'$ be $b$ and the index of $t$ in $\Fcal'$ be $d$, we claim that $[b,d)$ is an interval of $\persdgm_1(\Fcal')$. To prove this, first note that $\overline{z}$ is born in $K_b$ and becomes a boundary in $K_{d}$. By the time $t$ is added, $e$ is unpaired. So by Algorithm \ref{alg:diag}, $t$ must be paired with $e$.

Now we have constructed an instance of LST-PERS-CYC, from an instance of MAX-2SAT: 
Given the filtration $\Fcal'$ and the interval $[b,d)\in\persdgm_1(\Fcal')$, 
find a persistent $1$-cycle with the least number of edges. 
We then prove that the answer to LST-PERS-CYC is also the answer to MAX-2SAT. First note that the map $\Hm_1(K_b)\rightarrow \Hm_1(K_{d-1})$ is injective. 
This means that any persistent $1$-cycle for $[b,d)$ must be homologous to $\overline{z}$ in $K_{b}$, as they are homologous in $K_{d-1}$. 
It follows that computing the minimal persistent $1$-cycle of $[b,d)$ is equivalent to computing the minimal cycle of the homology class $[\overline{z}]$ in $K_b$, 
which is in turn equivalent to computing the answer for the original MAX-2SAT problem. 
Then we have had a reduction from MAX-2SAT to LST-PERS-CYC. 
Furthermore, the reduction is in polynomial time and the size of the constructed instance of LST-PERS-CYC is a polynomial function of that of MAX-2SAT, so LST-PERS-CYC is NP-hard.
\end{proof}


\end{document}